\documentclass{article}
\usepackage[utf8]{inputenc}
\usepackage{dsfont}
\usepackage{amsmath,amssymb,amsthm,mathtools}
\usepackage{xcolor}
\usepackage[onehalfspacing]{setspace}
\usepackage[nohead]{geometry}
\usepackage{latexsym}
\usepackage{array}
\usepackage{enumitem}
\usepackage[round,comma]{natbib}
\usepackage{pdfsync}

\definecolor{uniblau}{cmyk}{1,.9,0,0}

\usepackage{hyperref} 
\hypersetup{ 
     backref=true,    
     pagebackref=true,
     citecolor=uniblau,
     hyperindex=true, 
     colorlinks=true, 
     breaklinks=true, 
     urlcolor= uniblau,
     linkcolor= uniblau,
     bookmarks=true,  
     bookmarksopen=true
     pdftitle={Signaling via Money},
     pdfauthor={Olivier Bos, Martin Pollrich},  
     pdfsubject={}   
}
\usepackage{cleveref}
\usepackage[multiple]{footmisc}

\newtheorem{thm}{Theorem}

\Crefname{prop}{Proposition}
{Propositions}

\theoremstyle{definition}
\newtheorem{example}{Example}

\newtheorem{rem}{Remark}

\def\uv{\underline v}
\def\ov{\overline v}

\definecolor{beamer@blendedblue}{rgb}{0.2,0.2,0.7}

\definecolor{beamer@blendedred}{RGB}{214,39,40}

\title{Signaling via Money\footnote{We gratefully thank Michael Greinecker, Patrick Lahr, and Ernesto Rivera Mora for helpful comments and discussions. Financial support from the ANR SIGNAL (ANR-19-CE26-0009) is gratefully acknowledged.}}\author{{\sc Olivier Bos}\footnote{Universit\'e Paris-Saclay, ENS Paris-Saclay, Centre for Economics at Paris-Saclay. E-mail: olivier.bos@ens-paris-saclay.fr.} \; and {\sc Martin Pollrich}\footnote{Universit\'e Paris-Saclay, ENS Paris-Saclay, Centre for Economics at Paris-Saclay. E-mail: martin.pollrich@ens-paris-saclay.fr.}}

\date{October 2026}

\begin{document}

\bibliographystyle{agsm}\addtocontents{toc}{\protect\thispagestyle{plain}}

\maketitle

\begin{abstract}
We study mechanisms in which agents' preferences depend both on an object allocation and on a third party's beliefs about private information. A mechanism determines an allocation, transfers, and sends signals to the aftermarket. We compare two informational environments: (i) private transfers, where the aftermarket observes allocations and signals, and (ii) public transfers, where transfers are publicly observable. Our main result shows that any equilibrium outcome implementable with hidden transfers can be approximated arbitrarily well with public transfers. Thus, transfer observability generates at most a vanishing informational distortion and helps rationalize the common assumption that transfers remain hidden from aftermarkets.

\medskip\noindent
\textit{Keywords:} aftermarkets, public transfers mechanisms, signaling mechanisms\\
\textit{JEL classification:} D44; D82 
\end{abstract}

\section{Introduction}

In many mechanism design environments the agents interact with an aftermarket after execution of the initial mechanism. Examples are auctions with resale, where the winning bidder has the opportunity to sell the object further to a third party or takeover auctions, where bidding firms remain competitors in their respective markets. In many of these aftermarkets the agents' private information remains decision-relevant. For instance a winning bidder who offers the good for resale becomes a privately informed seller, and a firm's bidding in a takeover auction can be used by its competitors as a proxy for competitiveness. Controlling the information flow to the aftermarket therefore becomes a central feature of the mechanism design problem, because each agent's preferences depend not only on the outcome of the mechanism, but also on the outcome of the aftermarket. Put differently, the mechanism design problem is augmented by an information design problem. Importantly, there are two channels of information flow the designer must consider. First, she can directly communicate with the aftermarket by sending signals. Of course, these signals may be correlated with the information revealed by the agents to the mechanism. For instance the designer can fully reveal the agents' messages. Second, there is implicit information revelation from the observable components of the allocation implemented by the mechanism. For example, the fact that an agent was allocated an object is informative about that agent's private information given the allocation rule of the mechanism. In this paper, we focus on the constraints imposed on the mechanism designer by this latter channel of implicit information revelation.

The focus of this paper is the impact of implicit information revelation from observing the agents' monetary payments. Formally, we compare two informational environments. Under \emph{private transfers}, the aftermarket observes only the physical allocation (e.g. who gets the object) but not the payments made to or from the mechanism. In that environment, the mechanism may also communicate with the aftermarket through explicit signals. Under \emph{public transfers}, instead, payments themselves become part of the publicly observed outcome. We ask whether and how this observability of transfers constrains the mechanism designer.

To see why there is no trivial answer, reconsider the standard analytical procedure in mechanism design. Typically, the designer invokes a revelation principle and restricts attention to a direct mechanism. Such a mechanism associates an allocation with every type profile, allowing the aftermarket to infer precise information about agents' types by reverse engineering from the allocation rule. In particular, in many problems the transfers allow perfect inference about agents' types. While this is without harm if transfers remain hidden from the aftermarket, it may fundamentally alter the informational environment faced by the
aftermarket.\footnote{Yet, \cite{ErnestoMora} allows transfers to be observable, non-observable, or partially observable in extended direct mechanisms, taking the observability regime as given rather than studying its implications for implementability.}

With public transfers the designer faces the challenge of controlling the information flow to the aftermarket from the transfers, while at the same time these transfers remain crucial for incentive compatibility. Our main result shows that the apparent tension can be resolved approximately. Any equilibrium outcome implementable with private transfers can be replicated arbitrarily closely with public transfers. This provides a novel perspective on the role of observability in mechanisms with aftermarkets and belief-dependent preferences. While public transfers may reveal additional information, the resulting distortions can be made arbitrarily small as long as continuation payoffs depend on observables only through the posterior beliefs they induce. Importantly, this formulation already allows continuation payoffs to arise from strategic responses by outsiders, provided such responses depend only on posterior beliefs. The analysis therefore sheds new light on recent work on signaling in mechanisms followed by either an aftermarket \citep{BosPollrich,Dworczak19} or an after-game \citep{ErnestoMora}.

\medskip

The paper is organized as follows. \Cref{Section:Litterature} examines the related literature. \Cref{Section:Model} introduces the formal setting. \Cref{Section:Analysis} establishes our implementability equivalence result and illustrates how it extends previous contributions. Even when transfers are observable, equilibrium outcomes achievable with private transfers can be approximated arbitrarily closely with public transfers. \Cref{Section:Conclusion} concludes.

\section{Related Litterature}\label{Section:Litterature}

\cite{Dworczak19} studies mechanisms with aftermarkets and shows that implementability can be characterized by restricting attention to cutoff mechanisms, in which the information revealed about the winner depends only on the losers' bids. \cite{BosPollrich} investigate optimal auctions with reputational concerns and demonstrate how disclosure rules affect revenue through the bidders' endogenous outside options. They derive a revenue decomposition and show that optimality and revenue equivalence rely on full disclosure or on the observability of participation decisions, and extend these insights to settings with non-linear reputational concerns. \cite{ErnestoMora} develops a revelation principle for settings with belief-dependent preferences, introducing extended direct mechanisms that guarantee implementability when utilities depend on higher-order beliefs. We establish that equilibrium outcomes implementable with hidden transfers can be approximated arbitrarily well using public transfers. In this respect, our result broadens the scope of implementability analysis beyond the cutoff characterization of \cite{Dworczak19} and complements \cite{BosPollrich} by showing that transfer observability generates at most a vanishing implementability distortion. It also provides an approximate counterpart to the revelation principle of \cite{ErnestoMora} within the subclass of first-order belief-based environments.

\medskip

More broadly, the paper relates to the literature on mechanism design with information disclosure policies. \cite{CalzolariPavan06a, CalzolariPavan06b} study optimal allocation and information released in contracting environments where the agent interacts in an aftermarket, and identify conditions under which no disclosure to the aftermarket becomes optimal. \cite{EilatEliazMu24} study privacy-preserving auctions, where they define privacy loss as the mutual information between payments and types. They consider mechanisms in which the winner's identity and her payment are disclosed, and characterize those minimizing the winner's privacy loss given payment observability. While their analysis treats observability as a dimension shaping the mechanism design problem, our contribution shows that transfer observability imposes at most a vanishing implementability distortion. \cite{HauptHitzig} analyze implementability under constraints on how much private information the designer learns, minimizing redundant information revelation.  Our focus instead is on the robustness of implementability when transfers become observable to a third party, even though such observability affects the information available to an external audience rather than to the designer. Other papers provide positive comparative analyses of auction formats and disclosure policies \citep{BosTruyts17, GiovannoniMarkis14}, including settings where the auction is followed by oligopolistic competition (see for example \cite{DasVarma2003} and \cite{Goeree2003}).

\medskip

Our result is naturally related to the literature on the implementability of social choice functions in direct incentive-compatible mechanisms, initiated by \cite{Gibbard} and \cite{Myerson81}. More loosely related work has examined credibility and transparency in dynamic mechanisms. For example \cite{AkbarpourLi} study the credibility of dynamic auctions, while \cite{HakimovRaghavan} and \cite{Moeller} analyze how transparency requirements affect implementability. While our analysis focuses on mechanisms with a third party and the observability of transfers, these contributions highlight how credibility and transparency shape implementable outcomes in dynamic settings.

\section{Preliminaries}\label{Section:Model}
A designer allocates a single object, for which she has no value, among $n\geq 1$ agents with valuations $(v_1,\ldots,v_n)$ for the object. 

Agent $i$'s valuation for the object is denoted $v_i$, and independently drawn according to a cdf with support on $V_i:\equiv [\uv_i, \ov_i] \subset \mathbb{R}_{+}$.\footnote{There is no need to provide more details about this distribution. Moreover, for expositional simplicity, Example 2 below uses binary valuations.} Let $K=\{0,1,\ldots,n\}$ denote the set of feasible allocations, where $0$ stands for the seller keeping the object. A social choice function $(k,\mathbf t)$ consists of an allocation rule $k:V_1\times\cdots\times V_n\to K$ and a vector of transfers $\mathbf t=(t_1,\ldots,t_n)$, where $t_i:V_1\times\cdots\times V_n\to\mathbb R$. Implementation is understood in the standard interim sense: the induced equilibrium outcomes coincide with the target social choice function in terms of allocations and expected transfers.

After the mechanism, a Bayesian game unfolds between the agents and a third party (or general public), whose beliefs may depend on the observed outcome. While in some applications only the winner's beliefs are payoff-relevant, our formulation allows beliefs about all agents to matter. This audience forms posterior beliefs represented by a joint distribution $\mathbf F\in\Delta(V_1\times\cdots\times V_n),$ where $\Delta(V_1\times\cdots\times V_n)$ denotes the set of probability distributions over the agents' types. In our setting, agent $i$'s utility is affected by the aftermarket's beliefs about her own valuation through the mapping $\Phi_i:V_i\times K\times \Delta(V_1\times\cdots\times V_n)\to\mathbb R$. For simplicity, $\Phi_i(\cdot,k,\mathbf F)$ is assumed to be monotone and continuous. The formulation also allows continuation payoffs to arise from strategic responses by outsiders. Provided such responses depend only on posterior beliefs induced by observables, they can be represented in reduced form through $\Phi_i(v_i,k,\mathbf F)$. This is a reduced-form representation: strategic responses are incorporated through the posterior beliefs induced by observables.

Agent $i$'s utility from the allocation $(k,\mathbf t)$ is
$$
v_i\mathds 1_{\{k=i\}}  - t_i + \Phi_i(v_i,k,\mathbf F).
$$
\paragraph{Mechanisms with Signals.} A \emph{signaling mechanism} $\Gamma^s$ consists of message sets $(M_1,\ldots,M_n)$ and an allocation rule 
$$\sigma : M_1 \times \cdots \times M_n \times [0,1] \rightarrow K \times \mathbb R^n \times S,$$ 
where $S$ denotes the set of signals. Agents send messages to the mechanism. Conditional on the vector of messages and the realization of a random draw from $\mathcal U[0,1]$, the mechanism selects an allocation $(k,\mathbf t)$ and sends a signal $s\in S$ to the aftermarket. The public observes the allocation $k$ and the realized signal $s$, but neither transfers nor messages, and forms posterior beliefs based on these observations.

\paragraph{Mechanisms with Public Transfers.} A \emph{public transfers mechanism} $\Gamma^p$ consists of message sets $(M_1,\ldots,M_n)$ and an allocation rule 
$$\sigma : M_1 \times \cdots \times M_n \times [0,1] \rightarrow K \times \mathbb R^n.$$
Agents send messages to the mechanism. Conditional on the vector of messages and the realization of a random draw from $\mathcal U[0,1]$, the mechanism selects an allocation $(k,\mathbf t)$. In contrast to signaling mechanisms, the entire allocation $(k,\mathbf t)$ is publicly observable, while messages remain private.

\medskip

A social choice function $(k,\mathbf t)$ is said to be implementable if there exists a mechanism together with an equilibrium such that the induced equilibrium outcome coincides with $(k,\mathbf t)$ in terms of allocations and expected transfers.

\section{Results}\label{Section:Analysis}

We first state our main result and then provide two illustrations showing how it relates to previous work. A final example illustrates how publicly observed transfers can replicate arbitrarily closely the informational role of signals.

\begin{thm}[Approximate Implementability Equivalence]\label{prop1}
The following statements hold:
\begin{enumerate}[label=(\roman*)]
\item A social choice function $(k,\mathbf t)$ that is implementable with a signaling mechanism is approximately implementable with a public transfers mechanism.

That is, for every $\varepsilon>0$, there exists a public transfers mechanism admitting an $\varepsilon$-Bayes-Nash equilibrium such that the induced allocation probabilities, expected transfers, and interim expected payoffs are within $\varepsilon$ of those induced by $(k,\mathbf t)$.
\item Any social choice function $(k,\mathbf t)$ implementable with a public transfers mechanism is implementable with a signaling mechanism.
\end{enumerate}
\end{thm}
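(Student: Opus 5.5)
Part (ii) is the easy direction, and I would dispose of it first by a direct simulation argument. Take a public transfers mechanism $\Gamma^p$ with allocation rule $\sigma^p$ and an equilibrium implementing $(k,\mathbf t)$. Build a signaling mechanism $\Gamma^s$ with the same message sets and the signal space $S=\mathbb R^n$. Set $\sigma^s(m,\omega)=(k,\mathbf t,\mathbf t)$ whenever $\sigma^p(m,\omega)=(k,\mathbf t)$. That is, the mechanism announces the realized transfer vector as its signal. The public then observes $(k,s)=(k,\mathbf t)$, exactly as under $\Gamma^p$. Hence for every message profile the joint distribution of observables, and so the posterior $\mathbf F$ and the value of each $\Phi_i$, is unchanged. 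Payoffs from every unilateral deviation are therefore identical in the two mechanisms, so the same strategy profile is an equilibrium of $\Gamma^s$ with the same outcome.

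For part (i), fix a signaling mechanism $\Gamma^s$, an equilibrium implementing $(k,\mathbf t)$, and $\varepsilon>0$. I would construct $\Gamma^p$ so that the public transfers themselves encode the signal, while each agent's transfer stays close to the original.
\begin{itemize}
\item \textbf{Encoding.} Fix an injective measurable map $\iota:S\to(0,\delta)$; a Borel isomorphism works if $S$ is standard Borel. Let $\Gamma^p$ draw $\omega$ and compute $(k,\mathbf t,s)=\sigma^s(m,\omega)$.
\item \textbf{Masking the transfers.} The realized transfer $\mathbf t$ is itself informative beyond $s$, so it must be hidden. Replace it with its conditional expectation $\bar t_i(m_i)$, the interim expected transfer of agent $i$ given her own message. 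This preserves expected transfers by construction.
\item \textbf{Remaining leakage.} The announced amount $\bar t_i(m_i)$ still reveals $m_i$. To hide it, replace it with a constant transfer $c_i$ charged to everyone. Incentives come from a lottery: with probability $1-\eta$ the mechanism runs the masked outcome with public transfer $c_i+\iota(s)$. With probability $\eta$ it charges a large, message-dependent incentive transfer $(\bar t_i(m_i)-c_i)/\eta$, flagged by a distinguishable public tag (for example, transfers above a threshold).
\end{itemize}
The lottery keeps expected transfers, and hence interim incentives, exactly equal to those of $\Gamma^s$ up to the $\iota$ term. Beliefs can differ from the original only on the $\eta$-probability branch and through the $\delta$-perturbation. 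On the main branch, the observables $(k,c+\iota(s))$ are informationally equivalent to $(k,s)$, so the posteriors there coincide with the original ones.

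The payoff discrepancy from the flagged branch is at most $\eta$ times a bound on $|\Phi_i|$. Boundedness of $\Phi_i$ follows from continuity on a compact domain, using weak-* compactness of $\Delta(V)$. The $\iota$ term contributes at most $\delta$. Choosing $\eta,\delta$ small enough puts allocation probabilities, expected transfers and interim payoffs within $\varepsilon$. Any deviation gains at most twice that discrepancy relative to the original equilibrium, which yields an $\varepsilon$-Bayes--Nash equilibrium after rescaling.

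The main obstacle is the flagged branch, on which transfers reveal messages and thus can shift beliefs arbitrarily. The point to verify is that the payoff effect there is bounded uniformly in deviations, so it is controlled by $\eta\sup|\Phi_i|$; this is where continuity and compactness of $\Phi_i$'s domain are used. A second delicate point is the measurability of $\iota$ for a general signal space $S$. I would handle it by first replacing $S$ with the induced posteriors, which form a subset of the Polish space $\Delta(V)$, and then encoding those.
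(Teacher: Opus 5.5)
Your proposal is correct and follows essentially the paper's route. Part (ii) announces the realized transfer vector as the signal. Part (i) encodes the signal in public transfers on a probability-$(1-\eta)$ branch, so posteriors there coincide with those of $\Gamma^s$, and restores interim expected transfers through a large, message-dependent transfer on a distinguishable probability-$\eta$ branch, with payoff distortions of order $\eta\sup|\Phi_i|$.

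The differences are only in implementation. You keep the original message space and use a bounded near-constant encoding $c_i+\iota(s)$, whereas the paper passes to a direct mechanism with an arbitrary injective $f(k,s)$ and matches interim expected transfers exactly. As written, your flagged-branch charge $(\bar t_i(m_i)-c_i)/\eta$ leaves a stray $-\eta c_i$ shift in expected transfers; charging $(\bar t_i(m_i)-(1-\eta)c_i)/\eta$ removes it.
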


\begin{proof}

We first prove (ii). Fix a social choice function $(k,\mathbf t)$ that is implementable with a public transfers mechanism $\Gamma^p$. Construct a signaling mechanism $\Gamma^s$ with the same message spaces, allocation rule, and transfer rule as $\Gamma^p$, and let $\Gamma^s$ send a signal equal to the realized transfer vector.

The aftermarket therefore observes exactly the same information in $\Gamma^s$ as in $\Gamma^p$, so posterior beliefs are unchanged. Moreover, agents face the same allocation, the same transfers, and the same continuation payoffs after every message profile. Hence any equilibrium of $\Gamma^p$ remains an equilibrium of $\Gamma^s$, and the same social choice function is implemented.

\medskip

We now prove (i). Fix $\varepsilon>0$ and a social choice function $(k,\mathbf t)$ implemented by a signaling mechanism $\Gamma^s$. Let  $Q_{i}(v^\prime)$ and $T_{i}(v^\prime)$  denote the interim allocation probability and interim expected transfer induced when agent $i$ mimics the equilibrium behavior prescribed for type $v^\prime$. Let $k(v^\prime,V_{-i})$ and $s(v^\prime,V_{-i})$ denote the allocation and signal induced by the equilibrium of $\Gamma^s$ when agent $i$ mimics the equilibrium behavior prescribed for type $v^\prime$ while the other agents follow their equilibrium behavior. If the true type of agent $i$ is $v$ and she mimics the equilibrium behavior prescribed for type $v^\prime$, her interim payoff in $\Gamma^s$ is

\begin{equation*}
  U_i(v,v^\prime) =  v Q_i(v^\prime)  -  T_i(v^\prime)  +  \mathbb E_{V_{-i}}
  \left[\Phi_i\left(v, k(v^\prime,V_{-i}),  \mathbf F(k(v^\prime,V_{-i}),s(v^\prime,V_{-i}))  \right)  \right].
\end{equation*}

Since the equilibrium of $\Gamma^s$ implements $(k,\mathbf t)$, no type has a profitable deviation toward the equilibrium behavior prescribed for another type. Hence,
\begin{equation*}
  U_i(v,v) \geq U_i(v,v^\prime)  \text{ for all } i,\ v,\ v^\prime.
\end{equation*}

Fix $\delta\in(0,1)$. Its value will be fixed later as a function of $\varepsilon$. We construct a public transfers mechanism that approximates $\Gamma^s$. The constructed public transfers mechanism is a direct mechanism in which agents report types. Given a report profile, it uses the allocation and signal induced by the equilibrium of $\Gamma^s$ for the corresponding type profile, and then assigns transfers according to the rule constructed below. With probability $1-\delta$, it assigns the transfer vector

$$\widehat{\mathbf t}(k,s)=f(k,s),$$

where $f:\{0,1,\ldots,n\}\times S \to \mathbb R^n$ is an injective mapping, and $\widehat t_i(k,s)$ denotes the $i$th component of $f(k,s)$. Since the adjustment transfers may scale like $1/\delta$ as $\delta\to0$, the image of $f$ can be chosen to be disjoint from the set of possible adjustment transfers. Thus, conditional on the high-probability event, observing $(k,\widehat{\mathbf t})$ is equivalent to observing $(k,s)$. Hence both observables induce exactly the same posterior beliefs. With the remaining probability $\delta$, it assigns to each agent $i$ an adjustment transfer $\widetilde T_i(v^\prime)$, chosen so that each agent's interim expected transfer exactly matches that induced by $\Gamma^s$. Namely, define\footnote{As $\delta\to 0$, the realized value of $\widetilde T_i(v^\prime)$ on the event of probability $\delta$ may scale like $1/\delta$. This reflects a standard rare-event / large-transfer construction: expected transfers are matched while the probability of informative transfer realizations becomes arbitrarily small. Related low-probability, high-magnitude transfer devices appear in other mechanism design settings (e.g., \citealp{CremerMcLean1988}). If transfers are required to remain bounded (e.g., under limited liability), fixing $\delta$ away from zero or truncating $\widetilde T_i(v^\prime)$ yields bounded transfers and an arbitrarily close approximation, though not exact convergence as $\delta\to 0$.}

\begin{equation*}
\widetilde T_i(v^\prime)  =  \frac{T_i(v^\prime) -  (1-\delta)  \mathbb E_{V_{-i}}  \left[\widehat t_i(k(v^\prime,V_{-i}),s(v^\prime,V_{-i}))  \right]}{\delta}.
\end{equation*}
Importantly, $\widetilde T_i(v^\prime)$ depends only on $v^\prime$ and not on the agent's true type $v$. By construction, the interim expected transfer of agent $i$ following report $v^\prime$ is exactly

\begin{equation*}
(1-\delta)\mathbb E_{V_{-i}}\left[\widehat t_i(k(v^\prime,V_{-i}),s(v^\prime,V_{-i}))\right]  +  \delta \widetilde T_i(v^\prime)  =  T_i(v^\prime).
\end{equation*}

Hence the public transfers mechanism reproduces the interim expected transfers induced by $\Gamma^s$ exactly.

On the event of probability $1-\delta$, observing $(k,\widehat{\mathbf t})$ induces exactly the same posterior beliefs as observing $(k,s)$ in $\Gamma^s$, so continuation payoffs coincide with those in the signaling mechanism. On the complementary event, which has probability $\delta$, observing the public transfer vector may induce different posterior beliefs and therefore different continuation payoffs. Let $\widehat U_i^\delta(v,v^\prime)$ denote agent $i$'s interim payoff in the public transfers mechanism, and let $R_i^\delta(v,v^\prime)$ denote the expected continuation payoff conditional on the event of probability $\delta$. Then,
\begin{equation*}
\widehat U_i^\delta(v,v^\prime) =  v Q_i(v^\prime) - T_i(v^\prime)  + (1-\delta) \mathbb E_{V_{-i}}\left[\Phi_i\left(v,k(v^\prime,V_{-i}), \mathbf F(k(v^\prime,V_{-i}),s(v^\prime,V_{-i}))\right)\right] + \delta R_i^\delta(v,v^\prime).
\end{equation*}

Since $V_i$ is compact, $K$ is finite, and $\Delta(V_1\times\cdots\times V_n)$ is compact under the weak topology, continuity of $\Phi_i$ implies uniform boundedness. Hence there exists a constant $C<\infty$ such that, for all $i,v,k,\mathbf F$, $|\Phi_i(v,k,\mathbf F)|\leq C$. In particular, $|R_i^\delta(v,v^\prime)|\le C$ for all $i,v,v^\prime$. Therefore, for all $i,v,v^\prime$,
\begin{equation*}
\left|\widehat U_i^\delta(v,v^\prime) - U_i(v,v^\prime)\right|\leq 2C\delta.
\end{equation*}

It follows that truthful reporting is a $4C\delta$-Bayes-Nash equilibrium of the public transfers mechanism. Indeed, for every deviation $v^\prime$,
\begin{eqnarray*}
\widehat U_i^\delta(v,v) &\geq & U_i(v,v) - 2C\delta  \\
&\geq & U_i(v,v^\prime) - 2C\delta  \\
&\geq & \widehat U_i^\delta(v,v^\prime) - 4C\delta.
\end{eqnarray*}

Choose $\delta\in(0,1)$ such that $4C\delta<\varepsilon$. Then truthful reporting is an $\varepsilon$-Bayes-Nash equilibrium of the constructed public transfers mechanism. Moreover, interim expected payoffs differ from those in $\Gamma^s$ by at most $2C\delta<\varepsilon$, while allocation probabilities and expected transfers coincide exactly with those induced by $\Gamma^s$. Hence the constructed public transfers mechanism satisfies the properties stated in part (i) of the theorem. This proves approximate implementability with public transfers.

\end{proof}

The main idea behind the proof is as follows. In the signaling mechanism, each type $v$ faces an interim expected transfer $T_i(v)$ and continuation payoffs determined by the information revealed through $(k,s)$. Under public observability of transfers, directly reproducing the same transfer rule would generally reveal too much information. The construction therefore separates the informational and incentive roles of transfers.

The constructed public transfers mechanism is direct. Given a reported type profile, it reproduces the allocation and signal induced by the equilibrium of the signaling mechanism. With probability $1-\delta$, the mechanism assigns transfers of the form $\widehat t_i(k,s)=f_i(k,s)$, where the injective mapping $f$ encodes exactly the same informational content as the pair $(k,s)$ observed
in the signaling mechanism. Hence, on this event, posterior beliefs and continuation payoffs coincide exactly with those induced by the signaling mechanism. With the remaining probability $\delta$, the mechanism assigns an adjustment transfer $\widetilde T_i(v)$ chosen so that interim expected transfers exactly match those induced by the equilibrium of $\Gamma^s$. Large transfers arise precisely because the informationally equivalent transfer encoding $\widehat t_i(k,s)$ generally does not reproduce the interim expected transfers induced by the signaling mechanism. The adjustment transfer $\widetilde T_i(v)$ compensates for this discrepancy while occurring only on an event of vanishing probability.

Additional information may only be revealed on an event whose probability is arbitrarily small. Since continuation payoffs are uniformly bounded, the resulting distortion in interim expected payoffs and incentives is at most of order $\delta$. The proof therefore does not require tracking the exact posterior beliefs induced on the low-probability event: uniform boundedness of continuation payoffs is sufficient to control the resulting utility distortion. By choosing $\delta$ sufficiently small, the constructed public transfers mechanism approximates arbitrarily well the
equilibrium outcome induced by the signaling mechanism.

\medskip

Theorem \ref{prop1} connects the present framework to several existing approaches to mechanism design with signaling, aftermarkets and belief-dependent preferences.

\begin{example}[Signaling in auctions, \cite{BosPollrich}]
Consider $\Phi_i(v_i,k,\mathbf F)=\lambda\,\mathbb E[V_i\mid \mathbf F]$ with $\lambda>0$, together with the class of auction mechanisms in which $k=i$ whenever $t_i=\max_{j=1,\dots,n} t_j$, and any set of signals $S$. In this environment, the approximate equivalence of Theorem \ref{prop1} strengthens to an exact equivalence: any transfer rule $t$ that is optimally implementable in a signaling auction is also optimally implementable in a public transfers auction.
\end{example}

\begin{rem}[Belief-dependent preferences, \cite{ErnestoMora}]
\cite{ErnestoMora} establish a revelation principle for settings where agents have belief-dependent preferences. Our result complements this approach by showing that approximate implementability can be preserved even when transfers become publicly observable, while preserving the convenient analytical shortcut of identifying each type with its expected transfer.
\end{rem}

In \cite{Dworczak19}, implementability is characterized by the information partition induced by the mechanism, consisting of the cutoff structure and the post-allocation signals. Theorem \ref{prop1} shows that this informational structure can be approximated arbitrarily closely using publicly observed transfers instead of explicit signals. The construction relies on the $\varepsilon$-mixing device developed in the proof of Theorem \ref{prop1}, which ensures that any additional information conveyed by observable transfers arises only on a vanishing-probability event.

\begin{example}[Illustrative example: approximate replication with public transfers]

We consider two agents $i\in\{1,2\}$ with valuations $v_i\in\{L,H\}$ drawn i.i.d. and $0<L<H$ for a single object to allocate.\footnote{As in the preliminaries to establish Theorem \ref{prop1}, we do not need to provide more details about the distribution. Yet, characterizing the equilibrium and the incentive constraints would require specifying a distribution over types, for example $\Pr(V_i=H)=p\in(0,1)$.} Agent $i$'s utility from the allocation $(k,\mathbf t)$, with $K=\{0,1,2\}$ is given by
$$
v_i\mathds 1_{\{k=i\}}  - t_i + \alpha\,\mathbb{E}[V_i\mid \mathcal O], \text{ with }\alpha\ge 0,
$$

and where $\Phi_i(v_i,k,\mathbf F)= \alpha\,\mathbb{E}[V_i \mid \mathcal O]$. Therefore, rather than explicitly representing posterior beliefs through the distribution $\mathbf F$, we directly refer to the information observable $\mathcal O=(k,s)$ available to the aftermarket.

The social choice function considered here is such that the mechanism implements an efficient allocation (the higher valuation agent wins, ties are broken fairly) with the winner's transfer given by

$$
t(H)=\tau_H,\;\; t(L)=\tau_L,\;\; 0\le \tau_L<\tau_H<H,
$$

while loser transfers remain equal to $0$. Let us choose $(\tau_L,\tau_H)$ to satisfy the incentive compatibility (IC) and individual rationality (IR) constraints in this environment.

\paragraph{Signaling Mechanism $\Gamma^s$.}
Let us consider mechanism $\Gamma^s$ as described in Section \ref{Section:Model}. Each agent sends a message $m_i\in\{L,H\}$. The object is allocated to the highest report, while transfers are determined by $t(m_i)\in\{\tau_L,\tau_H\}$ for the winner and $0$ for the loser. The signal sent to the aftermarket is $s\in\{0,1\}$ with $s=1$ if and only if $t_i(m_i)=\tau_H$. The public observes $(k,s)$ but neither transfers nor messages.

Let $Q(v^\prime)$ be the interim winning probability when reporting $v^\prime\in\{L,H\}$ and $T(v^\prime)$ the expected transfer. For a type $v$ reporting $v^\prime$, expected utility is
$$
U_i(v,v^\prime)=v\,Q(v^\prime)-T(v^\prime)+\alpha\,\mathbb E_{V_{-i}}\left[\mathbb E[V_i\mid\mathcal O]\right],
$$
with $\mathcal O=(k(v^\prime,V_{-i}), s(v^\prime,V_{-i}))$. As the signal $s$ is binary, the winner's type is revealed as follows: $(k=i,s=1)\Rightarrow v_i=H$ and $(k=i,s=0)\Rightarrow v_i=L$.

\paragraph{Public Transfers Mechanism.}
We implement the same allocation but without an explicit signal. Instead, we make the observed transfer play the role of the signals by making the entire allocation $(k,\mathbf t)$ public. 

\medskip
\textbf{Signal-to-transfer encoding.} First, we directly map the signal $s$ from the signaling mechanism $\Gamma^s$ into the observed transfer deterministically, without adding any extra information.

\medskip

Let us choose an injective mapping $f:K\times\{0,1\}\to\mathbb R$, e.g.
$$
f(k,s)=10\cdot\mathds 1_{\{k=1\}}+20\cdot\mathds 1_{\{k=2\}}+s.
$$
Define the publicly observed transfer for the winner $i$ as
$$
\tilde t_i(k) \;=\; f\!\left(k,\,\mathds 1_{\{t_i=\tau_H\}}\right),
$$
with losers paying $0$. Because $f$ is injective, observing $(k,\tilde t_i)$ reveals exactly the same signal $s$ as in signaling mechanism $\Gamma^s$. Hence the aftermarket forms exactly the same posterior beliefs from $(k,\tilde t_i)$ as from $(k,s)$ in the signaling mechanism. This step illustrates how publicly observed transfers can replicate the informational role of signals.

\medskip

\textbf{$\varepsilon$-mixing device.}
We now introduce a small amount of randomization to prevent observable transfers from revealing additional information beyond the informational content of $(k,s)$.

\medskip

Fix $\delta\in(0,1)$. With probability $1-\delta$, the mechanism assigns the same publicly observed transfers as in the signal-to-transfer encoding step, namely
$$
\widehat t_i(k,s)=\widetilde t_i(k),
$$
with losers continuing to pay $0$. Hence observing $(k,\widehat t_i)$ induces exactly the same posterior beliefs as observing $(k,s)$ in the signaling mechanism. With the remaining probability $\delta$, the mechanism assigns an adjustment transfer $\widetilde T_i(v)$ chosen so that interim expected transfers coincide with those in the signaling mechanism.

Any additional information is therefore revealed only on an event of probability $\delta$. Since continuation payoffs are uniformly bounded, the resulting distortion in interim expected payoffs and incentives is of order $\delta$. Hence, by choosing $\delta$ sufficiently small, the public transfers mechanism approximates arbitrarily well the equilibrium outcome induced by the signaling mechanism.
\end{example}

\section{Conclusion} \label{Section:Conclusion}

Our analysis highlights a novel perspective on implementability and transfer observability in mechanism design. While related work studies how transfer observability shapes optimal mechanisms through information disclosure, our focus is on a more structural question: whether transfer observability fundamentally restricts implementability itself. We show that, under continuation payoffs that depend on observables only through posterior beliefs, the distortions induced by observable transfers can be made arbitrarily small. Specifically, any equilibrium outcome implementable with hidden transfers can be approximated arbitrarily well with publicly observable transfers. Observable transfers therefore generate at most a vanishing implementability distortion. Understanding when such approximations can be achieved with uniformly bounded transfers remains an interesting open question. These findings provide a new perspective on signaling in mechanisms followed by an aftermarket or more generally by post-mechanism belief formation. The framework also accommodates continuation payoffs arising from strategic responses by outsiders, provided these responses depend on observables only through the posterior beliefs they induce.

Future work may further explore how this framework applies across domains such as procurement, takeovers, advertising, fundraising, and other real-world contexts in which transfer transparency plays an important role.\footnote{See, for example, \cite{benabou2006incentives} on social image in charitable mechanisms, \cite{holmstrom1999managerial} on takeovers, and \cite{rayo2010optimal} on consumer choice.} An important direction for future research is to study environments in which continuation payoffs depend on observables through channels other than posterior beliefs.
\bibliography{biblio}
\end{document}